\newtheorem*{proof}{Proof:}
\newtheorem{theorem}{Theorem}
\newtheorem{lemma}{Lemma}
\begin{document}

	\title{Structured Connectivity for 6G Reflex Arc: Task-Oriented Virtual User and New Uplink-Downlink Tradeoff}

	\author{Xinran~Fang, Chengleyang Lei,
	Wei~Feng,~\IEEEmembership{Senior Member,~IEEE,} Yunfei~Chen,~\IEEEmembership{Senior Member,~IEEE,} Ning~Ge, and Shi~Jin,~\IEEEmembership{Fellow,~IEEE}
	\thanks{X.~Fang, C. Lei, W.~Feng, and N. Ge are with the State Key Laboratory of Space Network and Communications, Department of Electronic Engineering, Tsinghua University, Beijing 100084, China (e-mail: {fxr20}@mails.tsinghua.edu.cn, {lcly21}@mails.tsinghua.edu.cn, {fengwei}@tsinghua.edu.cn, and  {gening}@tsinghua.edu.cn.)
	Y. Chen is with the Department of Engineering, University of Durham, DH1 3LE Durham, U.K. (e-mail: yunfei.chen@durham.ac.uk). S. Jin is with the National Mobile Communications Research Laboratory, Southeast University, Nanjing 210096, China (e-mail: jinshi@seu.edu.cn).}\vspace{-1.6em}}

	\maketitle

	\begin{abstract}

		To accommodate the evolving demands of unmanned operations, the future sixth-generation (6G) network will support not only communication links but also sensing-communication-computing-control ($\mathbf{SC}^3$) loops. In each $\mathbf{SC}^3$ cycle, the sensor uploads sensing data to the computing center, and the computing center calculates the control command and sends it to the actuator to take action. To maintain the task-level connections between the sensor-computing center link and the computing center-actuator link, we propose to treat the sensor and actuator as a virtual user. In this way, the two communication links of the $\mathbf{SC}^3$ loop become the uplink and downlink (UL\&DL) of the virtual user. Based on the virtual user, we propose a task-oriented UL\&DL optimization scheme. This scheme jointly optimizes UL\&DL transmit power, time, bandwidth, and CPU frequency to minimize the control linear quadratic regulator (LQR) cost.  We decouple the complex problem into a convex UL\&DL bandwidth allocation problem with the closed-form solution for the optimal time allocation. Simulation results demonstrate that the proposed scheme achieves a task-level balance between the UL\&DL, surpassing conventional communication schemes that optimize each link separately.

	\end{abstract}

	\begin{IEEEkeywords}
		Joint uplink and downlink optimization, Sensing-communication-computing-control ($\mathbf{SC}^3$) loop, task-oriented communication, virtual user
	\end{IEEEkeywords}
\vspace{-0.5em}
	\section{Introduction}
	Currently, wireless communication has expanded from connecting humans to connecting machines. Driven by the increasing demands for unmanned operations, supporting field robots has been identified as an important use case for the sixth-generation (6G) network \cite{ref0}. Take disaster rescue as an example. Once the emergency happens, the network swiftly connects sensors, actuators, and the computing center into sensing-communication-computing-control ($\mathbf{SC}^3$) loops, which execute various tasks based on the periodical control. In each $\mathbf{SC}^3$ cycle, the sensor uploads the collected data to the computing center, the computing center calculates the control command and sends it to the actuator, and the actuator takes action. Relying on the effective feedback, the $\mathbf{SC}^3$ loops dynamically adapts to varying environments and finishes different tasks without human intervention.

	As shown in Fig. \ref{fig1}, the $\mathbf{SC}^3$ loop bears a striking resemblance to the reflex arc. Dating back to the 16th century, when Descartes conceptualized the reflex arc, biological researchers found that \emph{``the presence of reflexes is only dependent on the functional integrity of the components of the reflex arc”} \cite{bio1}. Similarly, the task-execution capability of the $\mathbf{SC}^3$ loop is only dependent on the integrity of its components. 
   	However, the fifth-generation (5G) network mainly serves humans and takes the communication link as the basic unit \cite{feng}. It applies the multiple access technique to separate different users and applies the duplex technique to separate the uplink and downlink (UL\&DL). As a result, the $\mathbf{SC}^3$ loop is considered as two independent links: one link from the sensor to the computing center and another link from the computing center to the actuator. Although this link-level decomposition brings high capacity to the network,  it breaks the task-level connections among the $\mathbf{SC}^3$ loop components, making it less effective for the task-oriented networks in 6G.

    To investigate the $\mathbf{SC}^3$ loop, the founder of cybernetics, Norbert Wiener, emphasized that \emph{``the problems of control engineering and of communication engineering were inseparable, and that they centered not around the technique of electrical engineering but around the much more fundamental notion of the message"} \cite{winner}. Therefore, when we are concerned about the \emph{``fundamental notion of the message"}—the information usage behind data transmission, it is easy to find that the two communication links within one $\mathbf{SC}^3$ loop are not independent for their own transmission but are interconnected for executing a common task.  To maintain the task-level connections between the two links, we propose to regard the sensor and actuator as a virtual user. Consequently, the link between the sensor and the computing center, along with the link between the computing center and the actuator, are considered the UL\&DL of the virtual user. This virtual-user approach allows us to take the UL\&DL as two interconnected links and jointly configure their communication capabilities. Thus, the UL\&DL could be aligned to support the functioning of the $\mathbf{SC}^3$ loop.

	In the literature, few works investigated the joint UL\&DL optimization given that frequency division duplexing (FDD) and time division duplexing (TDD) are commonly used to separate UL\&DL. In the early stage, El-Hajj \emph{et al.} addressed the demands for balanced UL\&DL and proposed a sum-rate maximization scheme. The UL\&DL rate difference was considered as a constraint in the optimization \cite{ref1}. Then, the emerging time-sensitive applications put high requirements on the round-time trip latency, leading to the joint UL\&DL design \cite{ref2,ref3, ref4, ref5}. The work in \cite{ref2,ref3, ref4} investigated the multi-access edge computing system, and proposed the energy/power minimization schemes under the round-time trip latency constraints.
    The work in \cite{ref5} proposed a bandwidth minimization scheme that optimizes the UL\&DL bandwidth and delay components.
    However, these studies still focused on communication-related metrics such as spectrum efficiency and transmission latency, while ignored the task performance behind the data transmission. When the focus shifts to the $\mathbf{SC}^3$ loop, in the field of wireless control system (WCS),  researchers investigated the similar control closed loop, and they paid attention to the control performance behind communication \cite{ref6}.
    For example, Gatsis \emph{et al.} focused on the uplink (UL) transmission and proposed a power adaptation scheme, which adapts the sensor power to both plant and channel states to minimize the linear quadratic regulator (LQR) cost \cite{Gatsis}. Extending to the multi-$\mathbf{SC}^3$ loops, Chang \emph{et al.} optimized UL bandwidth, power, and control convergence rate to maximize the spectrum efficiency \cite{Chang}. Focusing on the downlink (DL) transmission, Fang \emph{et al.} optimized the transmit power and block length \cite{Fang}, and Ana \emph{et al.} optimized the bandwidth \cite{Ana}. These studies provided great insights for the $\mathbf{SC}^3$-loop design. However, they all focused on one link and assumed the other link was ideal. There lacks the work that takes the $\mathbf{SC}^3$ loop as an integrated structure and jointly configure UL\&DL from a task-oriented perspective.

    In this letter, we investigate the $\mathbf{SC}^3$ loop and regard the sensor and actuator as a virtual user. Based on the virtual user, we propose a task-oriented joint UL\&DL optimization scheme. Using the control LQR cost as the objective, we jointly optimize UL\&DL transmit power, time, bandwidth, and computing CPU frequency. This complex problem is then simplified into a convex bandwidth allocation problem with the closed-form solution for the time allocation. The optimal LQR cost is also expressed in a closed-form. Our simulation results underscore the superiority of our task-oriented UL\&DL optimization scheme, showing that it achieves a task-level balance between UL\&DL.

	\section{Virtual-User-Based $\mathbf{SC}^3$ Loop Model}
	\begin{figure*} [t]
		\centering
		\includegraphics[width=1\linewidth]{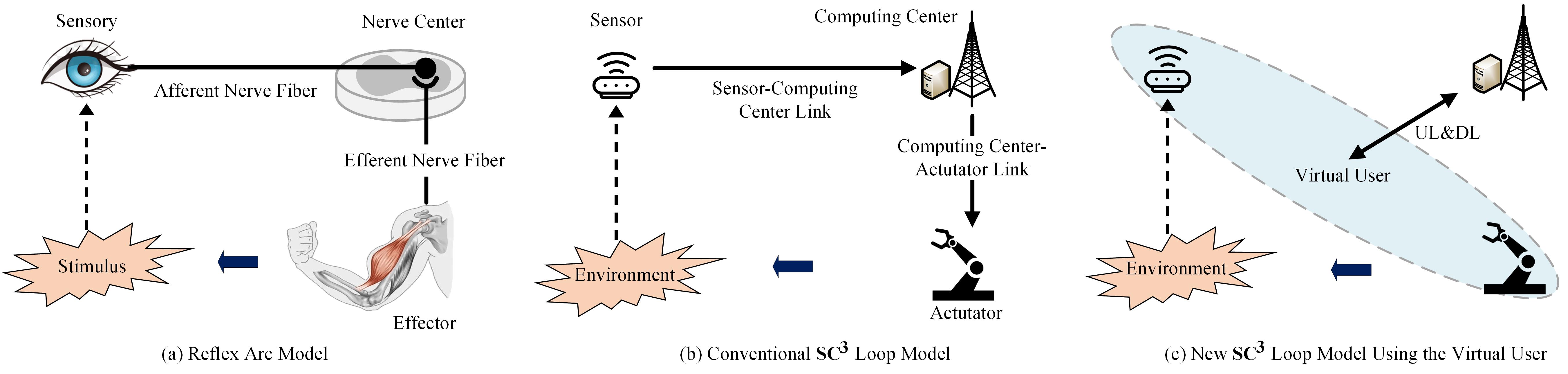}
		\caption{Illustration of the reflex arc model, conventional $\mathbf{SC}^3$ loop model, and new $\mathbf{SC}^3$ loop model using the virtual user.}
		\label{fig1}
	\end{figure*}
	As shown in Fig. \ref{fig1}, the $\mathbf{SC}^3$ loop encompasses three nodes: the sensor, the computing center, and the actuator, connected by two communication links: one between the sensor and the computing center, and another between the computing center and the actuator. To maintain the task-level connections among $\mathbf{SC}^3$ loop components, we regard the sensor and actuator as a single virtual user. In this setup, the sensor utilizes the UL to upload sensing data, while the actuator employs the DL to receive commands. The considered $\mathbf{SC}^3$ loop is executing a control-type task. Without loss of generality, we model the controlled system as a linear time-invariant system. The discrete-time evolution equation is given by,
	\begin{equation}
		\label{sta_elv}
		\mathbf{x}_{i+1} = \mathbf{A}\mathbf{x}_{i}+\mathbf{B}\mathbf{u}_{i}+\mathbf{v}_{i},
	\end{equation}
    where $i$ is the time index, $\mathbf{x}_{i}\in\mathbb{R}^{n\times1}$ is the system state, $\mathbf{u}_{i}\in\mathbb{R}^{m\times1}$ is the control action, $\mathbf{v}_{i}\in\mathbb{R}^{n\times1}$ is the process noise, and $\mathbf{A}\in\mathbb{R}^{n\times n}$ and $\mathbf{B}\in\mathbb{R}^{n\times m}$ are determined by the system dynamics. To measure the control performance, the LQR cost is used, which is a weighted summation of the state derivation and control input,
	\begin{equation}
		l=	\limsup\limits_{N\rightarrow \infty}\mathbb{E} \left[ \sum_{i=1}^{N} \left(\mathbf{x}_{i}^\text{T}\mathbf{Q}\mathbf{x}_{i} +\mathbf{u}_{i}^\text{T}\mathbf{R}\mathbf{u}_{i}\right) \right],
	\end{equation}
	where $\mathbf{Q}\in\mathbb{R}^{n\times n}$ and $\mathbf{R}\in\mathbb{R}^{m\times m}$ are two weight matrices.

	Denote UL\&DL transmit power, time, and bandwidth as $p_{u/d}$, $t_{u/d}$, and $B_{u/d}$, respectively. For simplicity, we use the subscript $u/d$  to represent UL/DL.
    Within an $\mathbf{SC}^3$ cycle, the UL\&DL transmit information, denoted as $D_{u/d} \ $, is given by:
		\begin{equation}
		D_{u/d}\leqslant t_{u/d}R(p_{u/d},B_{u/d}) \quad (\text{bits}/\mathbf{SC}^3 \text{cycle}),
	\end{equation}
 where $R(p_{u/d},B_{u/d})$ denotes the UL\&DL rate,
 \begin{equation}
 	\label{rate}
 	R(p_{u/d},B_{u/d})\triangleq B_{u/d}\log_2(1+\frac{p_{u/d}|h_{u/d}|^2}{B_{u/d}N_0}) \quad (\text{bits/s}),
 \end{equation}
 where $h_{u/d}$ denotes the channel gain, which is assumed to be constant along the control process, and $N_0$ denotes the noise power spectral density. For computing, it is
 modeled as an information-extraction process, which can be described as,
	\begin{equation}
		D_u\rightarrow \rho D_u,
	\end{equation}
    where $\rho\ (0<\rho<1)$ denotes the information extraction ratio, and $\rho D_u$ quantifies the task-related information extracted from the sensing data. It also quantifies the information contained in the command.
    Denote $f$ as the CPU frequency and the computing time is calculated by,
	\begin{equation}
		t_c=\frac{\alpha D_u}{f},
	\end{equation}
    where $\alpha \quad (\text{cycle/bit})$  denotes the required CPU cycles to process one-bit raw data, which quantifies the computation complexity.
    Then, the command is sent to the actuator. Considering the DL capacity constraint, the task-related information received by the actuator is jointly determined by UL\&DL:
	\begin{equation}
		\label{f7}
		D_{\mathbf{SC}^3}=\min (\rho D_u, D_d) \quad (\text{bits}/\mathbf{SC}^3 \text{cycle}).
	\end{equation}
   $D_{\mathbf{SC}^3}$ also denotes the information that finally takes effect for the control task within an $\mathbf{SC}^3$ cycle, which we refer to as  the closed-loop entropy rate. Denoting the time for accomplishing one $\mathbf{SC}^3 \ \text{cycle}$ as $T$, we have the following cycle-time constraint for UL\&DL and computing time:
   \begin{equation}
   	t_u+t_c+t_d\leqslant T.
   \end{equation}
   Moving forward, the lower bound of the LQR cost has the following relationship with the closed-loop entropy rate \cite{ref14}:
    \begin{equation}\label{f6}
    l \geqslant \frac{n N \!\left( \mathbf{v}\right)|\det \mathbf{M}|^\frac{1}{n}} {2^{\frac{2}{n}(D_{\mathbf{SC}^3}-\log_2|\det \mathbf{A}|)}-1}+\text{tr}\left( \mathbf{\Sigma}_{\mathbf{v}}\mathbf{S}\right),
    \end{equation}
    where $N(\mathbf{x})\triangleq\frac{1}{2\pi e}e^{\frac{2}{n}h(\mathbf{x})}$, $h(\mathbf{x})\triangleq-\int_{\mathbb{R}^n}f_{\mathbf{x}}(x)\log f_{\mathbf{x}}(x)dx$, $\mathbf{\Sigma}_{\mathbf{v}}$ is the covariance matrix of the process noise, and $\log_2|\det\mathbf{A}|$ is the intrinsic entropy rate. To ensure the system can be stabilized, the closed-loop entropy rate  needs to exceed the intrinsic entropy rate,
    	\begin{equation}
    		\label{stable}
    		D_{\mathbf{SC}^3}>\log_2|\det \mathbf{A}|.
    	\end{equation}
    The above stability condition is both necessary and sufficient. Provided \eqref{stable} is satisfied, there must exist a code-controller that ensures the asymptotic stability, i.e., $\lim\limits_{i\rightarrow +\infty} \sup\parallel\mathbf{x}_i\parallel<\infty$ \cite{Nair}. Therefore, \eqref{f6} makes sense in the stable region, otherwise, the LQR cost is infinite. $\mathbf{M}$ and $\mathbf{S}$ are solved by the following Riccati equations:
	\begin{equation}\label{Riccati}
			\mathbf{S} = \mathbf{Q} + \mathbf{A}^\text{T}\left(\mathbf{S}- \mathbf{M}\right) \mathbf{A}, \
			\mathbf{M}  = \mathbf{S}^T \mathbf{B} \left( \mathbf{R} + \mathbf{B} \mathbf{S} \mathbf{B}\right)^{-1} \mathbf{B}^\text{T} \mathbf{S}.
	\end{equation}

	\section{Task-Oriented UL\&DL Optimization Scheme}
	In the considered $\mathbf{SC}^3$ loop model, we can see that the LQR cost is bounded by the closed-loop entropy rate, which, in turn, is determined by the UL\&DL capabilities. Additionally, the computing speed influences the UL\&DL time and indirectly influences the control performance.
	Therefore, we jointly optimize the UL\&DL transmit power, time, bandwidth, and CPU frequency to minimize the LQR cost,
	\begin{subequations}
	\begin{align}
		\mbox{(P1)} \ \ &\min\limits_{p_u,t_u,B_u,f,p_d,t_d,B_d,\geqslant 0} l \\
		\text{s.t.} \ &l \geqslant \frac{n N \!\left( \mathbf{v}\right)|\det \mathbf{M}|^\frac{1}{n}} {2^{\frac{2}{n}(D_{\mathbf{SC}^3}-\log_2|\det \mathbf{A}|)}-1}+\text{tr}\left( \mathbf{\Sigma}_{\mathbf{v}}\mathbf{S}\right) \label{a1} \\
		&D_{\mathbf{SC}^3}>\log_2|\det \mathbf{A}|S \label{12c} \\
		&D_{\mathbf{SC}^3}\leqslant \min(\rho D_u,D_d) \label{b4}\\
		&D_{u/d}\leqslant t_{u/d}R(p_{u/d},B_{u/d}) \label{b7}\\
		&t_u+\frac{\alpha D_u}{f}+t_d\leqslant T \label{b6}\\
		&B_u+B_d\leqslant B_{\max},  p_{u/d}\leqslant P_{\text{umax}/\text{dmax}}, f\leqslant f_{\max}, \label{f13}
	\end{align}
	\end{subequations}
 where \eqref{f13} represents the resource constraints on the bandwidth, transmit power, and  CPU frequency, with the maximal value denoted by $B_{\max}$, $P_{\text{umax}/\text{dmax}}$, and $f_{\max}$. Given that the right-hand side of \eqref{a1} is a decreasing function of $D_{\mathbf{SC}^3}$, minimizing the LQR cost is equivalent to maximizing the closed-loop entropy rate. In addition, it is easy to find that the optimal solution necessitates the full utilization of the transmit power and CPU frequency, i.e., $p_{u/d}^*=P_{\text{umax}/\text{dmax}}$, and $f^*=f_{\text{max}}$. On this basis, we denote  $R(B_{u/d})\triangleq R(p_{u/d}^*,B_{u/d})$. Then, (P1) is simplified into an UL\&DL time and bandwidth allocation problem as follows:
	    \begin{subequations}
		\begin{align}
			\mbox{(P2)} \ \ &\max\limits_{t_u,B_u,t_d,B_d\geqslant 0}  D_{\mathbf{SC}^3} \\
			\text{s.t.}  \ \
			&D_{\mathbf{SC}^3}\leqslant \min(\rho D_u,D_d) \\
			&D_{u/d}\leqslant t_{u/d}R(B_{u/d}) \label{a2} \\
			&t_u+\frac{\alpha D_u}{f_{\max}}+t_d\leqslant T \label{f10}\\
			&B_u+B_d\leqslant B,
		\end{align}
	\end{subequations}
	where we omit the stability condition \eqref{12c} and test it after solving (P2). If the optimal closed-loop entropy rate, $(D_{\mathbf{SC}^3})^*$, satisfies the stability condition, the LQR cost is calculated by \eqref{f6}. Otherwise, the system cannot be stabilized and the LQR cost is infinite.
    \begin{lemma}
        The optimal solution to (P2) is to achieve a task-level balance between UL\&DL:
        \begin{equation}
        \rho D_u^*=D_d^*. \label{35}
        \end{equation}
    \end{lemma}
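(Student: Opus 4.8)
The plan is to argue by contradiction with a resource–exchange (perturbation) argument. First I would record two easy facts about any optimal solution $(t_u^*,B_u^*,t_d^*,B_d^*)$, with induced quantities $D_u^*,D_d^*,D_{\mathbf{SC}^3}^*$. (i) The coupling constraint is tight, $D_{\mathbf{SC}^3}^*=\min(\rho D_u^*,D_d^*)$: otherwise $D_{\mathbf{SC}^3}$ could be raised to $\min(\rho D_u^*,D_d^*)$ with all other variables fixed, which is still feasible and strictly improves the objective. (ii) Non-degeneracy: assuming the instance is non-trivial (positive time, bandwidth and power budgets), the optimum has $D_{\mathbf{SC}^3}^*>0$, hence $D_u^*,D_d^*,t_u^*,t_d^*,B_u^*,B_d^*>0$ and in particular $R(B_u^*),R(B_d^*)>0$. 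Now suppose, towards a contradiction, that $\rho D_u^*\neq D_d^*$; there are two cases.

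Case A: $\rho D_u^*>D_d^*$, so by (i) $D_{\mathbf{SC}^3}^*=D_d^*$. Here the uplink carries more task information than the downlink can forward, so I would shrink the uplink payload and divert the computing time it frees to the downlink. Concretely, for small $\epsilon>0$ set $D_u'=D_u^*-\epsilon$ (feasible for \eqref{a2} since $D_u'<D_u^*\le t_u^*R(B_u^*)$), keep $t_u^*,B_u^*,B_d^*$ unchanged, and put $t_d'=t_d^*+\tfrac{\alpha\epsilon}{f_{\max}}$; the left-hand side of \eqref{f10} is unchanged because the computing term $\alpha D_u/f_{\max}$ drops by exactly $\tfrac{\alpha\epsilon}{f_{\max}}$, which is what is added to $t_d$. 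The downlink can then deliver $D_d'=\min(\rho D_u',\,t_d'R(B_d^*))$; since $t_d'R(B_d^*)\ge D_d^*+\tfrac{\alpha\epsilon}{f_{\max}}R(B_d^*)>D_d^*$ and $\rho D_u'=\rho D_u^*-\rho\epsilon>D_d^*$ whenever $\epsilon<(\rho D_u^*-D_d^*)/\rho$, we obtain $D_{\mathbf{SC}^3}'=D_d'>D_d^*=D_{\mathbf{SC}^3}^*$, contradicting optimality.

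Case B: $\rho D_u^*<D_d^*$, so $D_{\mathbf{SC}^3}^*=\rho D_u^*$. The symmetric move reclaims time from the downlink and spends it on the uplink. Since $D_d^*>\rho D_u^*$ strictly, decreasing $t_d$ by a small $\delta>0$ and resetting $D_d$ to $\min(D_d^*,(t_d^*-\delta)R(B_d^*))$ — still $>\rho D_u^*$ for $\delta$ small — releases time $\delta$ while keeping \eqref{a2} valid. Raising $D_u$ by $\eta$ via \eqref{a2} costs at most $\eta/R(B_u^*)$ of transmission time plus $\alpha\eta/f_{\max}$ of computing time, i.e. $\eta$ times a finite constant $c>0$; taking $\eta=\delta/c$ keeps \eqref{f10} feasible. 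Then $\rho D_u$ rises to $\rho(D_u^*+\eta)>\rho D_u^*$ while $D_d$ stays above $\rho D_u^*$, so $D_{\mathbf{SC}^3}=\min(\rho D_u,D_d)$ strictly increases — again a contradiction. Hence $\rho D_u^*=D_d^*$, which is \eqref{35}.

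The step I expect to be the main obstacle is keeping all constraints simultaneously feasible under the perturbation, because $D_u$ enters both the uplink rate constraint \eqref{a2} and, through the computing time $\alpha D_u/f_{\max}$, the cycle-time constraint \eqref{f10}. The exchange therefore has to be bookkept carefully so that the time freed or consumed by the computing term is accounted for together with the transmission time, and so that after the perturbation it is the correct branch of the $\min$ in $D_{\mathbf{SC}^3}=\min(\rho D_u,D_d)$ that has strictly increased; a secondary subtlety is justifying the non-degeneracy in (ii), which is what guarantees $R(B_u^*),R(B_d^*)>0$ and makes the exchange strictly productive.
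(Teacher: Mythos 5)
Your proof is correct and follows essentially the same route as the paper: a contradiction argument in which, whenever $\rho D_u^*\neq D_d^*$, a small reallocation of the cycle-time budget strictly increases $D_{\mathbf{SC}^3}=\min(\rho D_u,D_d)$. The only differences are cosmetic—you exploit the freed computing time $\alpha\epsilon/f_{\max}$ in the case $\rho D_u^*>D_d^*$ (which tacitly uses $\alpha>0$, as the model assumes) where the paper transfers UL transmission time to the DL, and you spell out the feasibility bookkeeping and non-degeneracy that the paper leaves implicit.
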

    \begin{proof}
    To maximize the closed-loop entropy rate, \eqref{a2} must be satisfied as an equality at the optimal solution: $D_{u/d}^*=t_{u/d}R(B_{u/d})$. If \eqref{35} does not hold at the optimal solution, e.g., $\rho D_u^*>D_d^*$,  the optimal closed-loop entropy rate is determined by $D_{\mathbf{SC}^3}^*=D_d^*$. However, $D_{\mathbf{SC}^3}^*$ can be further increased by reallocating the UL time to the DL: $t_{d}R(B_{d})\uparrow \ \rightarrow \ D_d^*\uparrow\ \rightarrow \ D_{\mathbf{SC}^3}^* \uparrow$,  as long as  $\rho D_u^*>D_d^*$ holds.
    This contradicts to the assumption that $D_{\mathbf{SC}^3}^*$ is optimal. We can use the similar reasoning to falsify $\rho D_u^* < D_d^*$. Therefore, $\rho D_u^*= D_d^*$ must hold at the optimal solution.\hfill $\blacksquare$\par
    \end{proof}
    By leveraging the task-level balance between UL\&DL, we show that (P2) can be converted into a convex bandwidth allocation problem with the closed-form solution for the time allocation.
    \begin{theorem}
    The optimal UL\&DL bandwidth allocation is achieved by solving the following problem:
     \begin{subequations}
	\begin{align}
		\mbox{(P3)} \ \ \min\limits_{B_u,B_d\geqslant 0} \ \ &\frac{1}{\rho R(B_u)}+\frac{1}{ R(B_d)} \\
		\text{s.t.} \ \
		&B_u+B_d\leqslant B_{\max}.
	\end{align}
	\end{subequations}
    The optimal UL\&DL time are given by,
    \begin{equation}
			\label{f11}
				t_u^*=\frac{\frac{1}{\rho R_u^*} T}{\frac{1}{\rho R_u^*}+\frac{1}{R_d^*}+\frac{ \alpha }{\rho f_{\max}}},
				\quad t_d^*=\frac{\frac{1}{R_d^*}T}{\frac{1}{\rho R_u^*}+\frac{1}{R_d^*}+\frac{\alpha}{\rho f_{\max}}}.\\
    \end{equation}
    where $R_{u/d}^*\triangleq R(p_{u/d}^*, B_{u/d}^*)$ and $B_{u/d}^*$ is the solution to (P3). The optimal closed-loop entropy rate and LQR cost are given by,
	\begin{equation}
		\label{b3o}
		\begin{aligned}
			D_{\mathbf{SC}^3}^*=\frac{T}{\frac{1}{\rho R_u^*}+\frac{1}{R_d^*}+\frac{\alpha}{ \rho f_{\max}}},
		\end{aligned}
	\end{equation}
	\begin{equation}
		\label{f15}
		l^*= \left\{
		\begin{aligned}
			&+\infty,  \quad\quad \quad\quad\quad\quad \quad\quad \ D_{\mathbf{SC}^3}^*\leqslant \log_2|\det \mathbf{A}|,&\\
			&\small \frac{n N \!\left( \mathbf{v}\right)|\det \mathbf{M}|^\frac{1}{n}} {2^{\frac{2}{n}\bigg(D_{\mathbf{SC}^3}^*-\log_2|\det \mathbf{A}|\bigg)}-1}+\text{tr}\left( \mathbf{\Sigma}_{\mathbf{v}}\mathbf{S}\right), \text{otherwise.}&
		\end{aligned}
		\right.
	\end{equation}
    \end{theorem}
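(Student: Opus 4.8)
The plan is to use Lemma~1 to collapse (P2) into a pure bandwidth-allocation problem, argue its convexity, and then recover the time allocation, the closed-loop entropy rate, and the LQR cost by direct substitution.

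First I would pin down which constraints of (P2) are active at the optimum. The rate constraints are tight, $D_{u/d}^* = t_{u/d}^* R(B_{u/d}^*)$, as established in the proof of Lemma~1. The cycle-time constraint \eqref{f10} is tight as well: were there slack, scaling $(t_u, D_u, t_d) \mapsto (1+\eta)(t_u, D_u, t_d)$ for a sufficiently small $\eta > 0$ keeps every constraint satisfied while multiplying both $\rho D_u$ and $D_d$ — hence $D_{\mathbf{SC}^3} = \min(\rho D_u, D_d)$ — by $1+\eta$, contradicting optimality. Together with Lemma~1, set $D \triangleq D_{\mathbf{SC}^3}^* = \rho D_u^* = D_d^*$; then $t_u^* = D/(\rho R(B_u^*))$, $t_d^* = D/R(B_d^*)$, and the computing time equals $\alpha D_u^*/f_{\max} = \alpha D/(\rho f_{\max})$. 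Substituting these into the equality form of \eqref{f10} and solving for $D$ yields exactly \eqref{b3o}.

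Since $T$, $\alpha$, $\rho$, and $f_{\max}$ are fixed, maximizing the right-hand side of \eqref{b3o} over $(B_u, B_d)$ is equivalent to minimizing $\frac{1}{\rho R(B_u)} + \frac{1}{R(B_d)}$ subject to $B_u + B_d \le B_{\max}$, which is precisely (P3). I would then verify (P3) is convex. The per-link rate $R(B) = B\log_2\!\big(1 + p|h|^2/(B N_0)\big)$ is the perspective of the concave map $x \mapsto \log_2(1+x)$, hence positive, increasing, and concave on $B>0$; and for any positive concave $f$, the reciprocal $1/f$ is convex, since $(1/f)'' = \big(2(f')^2 - f f''\big)/f^3 \ge 0$ whenever $f>0$ and $f'' \le 0$. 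Thus the objective of (P3) is a sum of a convex function of $B_u$ and a convex function of $B_d$, the constraints are affine, and the program is convex; moreover $1/R(B) \to \infty$ as $B \to 0^+$ forces $B_u^*, B_d^* > 0$ and $B_u^*+B_d^* = B_{\max}$, so (P3) is solved, e.g., by its KKT conditions or a one-dimensional search over $B_u$.

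With $B_{u/d}^*$ — and hence $R_{u/d}^* = R(p_{u/d}^*, B_{u/d}^*)$ — determined, substituting $D = D_{\mathbf{SC}^3}^*$ from \eqref{b3o} into $t_u^* = D/(\rho R_u^*)$ and $t_d^* = D/R_d^*$ gives the closed-form time allocation \eqref{f11}. Finally, by the observation following (P1) that \eqref{a1} holds with equality at the optimum and its right-hand side is strictly decreasing in $D_{\mathbf{SC}^3}$, the optimal LQR cost is $+\infty$ when \eqref{stable} is violated, i.e., $D_{\mathbf{SC}^3}^* \le \log_2|\det\mathbf{A}|$, and otherwise equals the right-hand side of \eqref{f6} evaluated at $D_{\mathbf{SC}^3}^*$ — which is \eqref{f15}. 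The only non-routine points are the activation argument for \eqref{f10} (handled by the proportional-scaling perturbation) and the reciprocal-of-a-concave-function convexity fact; everything else is substitution.
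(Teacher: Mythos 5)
Your proposal is correct and takes essentially the same route as the paper's proof: it uses Lemma~1 together with the tight rate and cycle-time constraints to express $D_{\mathbf{SC}^3}^*$ as $T/\bigl(\frac{1}{\rho R(B_u)}+\frac{1}{R(B_d)}+\frac{\alpha}{\rho f_{\max}}\bigr)$, reduces the maximization to (P3), and recovers \eqref{f11}, \eqref{b3o}, and \eqref{f15} by substitution and the stability test. The only additions are your explicit scaling argument for why \eqref{f10} is active and your verification that (P3) is convex (perspective-function concavity of $R$ plus convexity of the reciprocal of a positive concave function), both of which the paper assumes or asserts without proof.
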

    \begin{proof}
   According to \eqref{a2} and \eqref{35}, the optimal DL time can be expressed as a function of the UL time:
    \begin{equation}
    \label{b1}
        \rho t_u^*R(B_u)=t_d^*R(B_d) \Rightarrow t_d^*=\frac{\rho t_u^*R(B_u)}{R(B_d)}.
    \end{equation}
    The computing time can also be expressed as a function of the UL time,
    \begin{equation}
    \label{b2}
        t_c^*=\frac{\alpha D_u^*}{f_{\max}}=\frac{\alpha t_u^*R(B_u)}{f_{\max}}.
    \end{equation}
    By substituting \eqref{b1} and \eqref{b2} into \eqref{f10}, the optimal UL time is expressed as a function of the bandwidth:
    \begin{equation}
    \label{b5}
  t_u^*+t^*_c+t^*_d=T        \Rightarrow
         t_u^*=\frac{\frac{1}{\rho R(B_u)} T}{\frac{1}{\rho R(B_u)}+\frac{1}{R(B_d)}+\frac{\alpha}{\rho f_{\max}}}.
    \end{equation}
    On this basis, the optimal closed-loop entropy rate  can be expressed as the function of the bandwidth:
    \begin{equation}
    \label{b3}
			D_{\mathbf{SC}^3}^*=\rho D_u^*
			=\rho  t_u^*R(B_u)=\frac{T}{\frac{1}{\rho R(B_u)}+\frac{1}{R(B_d)}+\frac{\alpha}{ \rho f_{\max}}}.
	\end{equation}
    From \eqref{b3}, we can find that maximizing the closed-loop entropy rate is to minimize $[\frac{1}{ R(B_u)}+\frac{\rho}{R(B_d)}]$ and thus we get (P4). By solving (P4) and substituting the optimal UL\&DL bandwidth, $B_u^*$ and $B_d^*$, into the \eqref{b3} and \eqref{b5}, we get $t_u^*$ and $D_{\mathbf{SC}^3}^*$. We could further get $t_d^*$ by substituting $t_u^*$ into \eqref{b1} and get $l^*$  by testing the stability condition \eqref{stable} and substituting  $D_{\mathbf{SC}^3}^*$ into \eqref{f6}. \hfill $\blacksquare$\par
    \end{proof}
In fact, (P4) is a time-minimization problem. Its objective, $[\frac{1}{\rho R(B_u)}+\frac{1}{ R(B_d)}]$, represents the  UL\&DL time for transmitting one-bit task-related information. Given that maximizing the closed-loop entropy rate is equivalent to minimize the time for transmitting one-bit task-related information from the sensor to the actuator, the equivalence of (P3) and (P4) becomes obvious. We summarized the proposed in {\bf{Algorithm \ref{Tab1}}}.
\begin{algorithm}[t]
	\caption{The Proposed Optimization Algorithm} \small
	\label{Tab1}
	\begin{algorithmic}[1]
		\REQUIRE
		Control related parameters: $n$, $\log_2|\det \mathbf{A}|$,  $\mathbf{B}$, $\mathbf{Q}$, $\mathbf{R}$, $T$ and $\Sigma_{\mathbf{v}}$;
			Communication related parameters: $P_{\text{umax}}$, $P_{\text{dmax}}$, $B_{\max}$, $h_u$, $h_d$, and $N_0$;
			Computing related parameters: $\alpha$, $\rho$, and $f_{\max}$;
		\STATE  Calculate $\mathbf{S}$ and $\mathbf{M}$ according to \eqref{Riccati};
		\STATE  Calculate the optimal UL\&DL transmit power and CPU frequency: $p^*_{u/d}=p_{\text{umax/dmax}}$ and $f^*=f_{\max}$;
		\STATE  Solve (P3) to obtain the optimal bandwidth allocation, $B_{u/d}^*$;
		\STATE	Calculate $R_{u/d}^*$ according to \eqref{rate}, and calculate the optimal time allocation according to  \eqref{f11};
		\STATE	Calculate the optimal closed-loop entropy rate, {$D_{\mathbf{SC}^3}^*$}, according to \eqref{b3o};
		\STATE Judge the stability condition \eqref{stable} and calculate the limit LQR cost, $l^*$, according to \eqref{f15}.
	\end{algorithmic}
\end{algorithm}

\begin{figure} [t]
	\centering
	\includegraphics[width=0.9\linewidth]{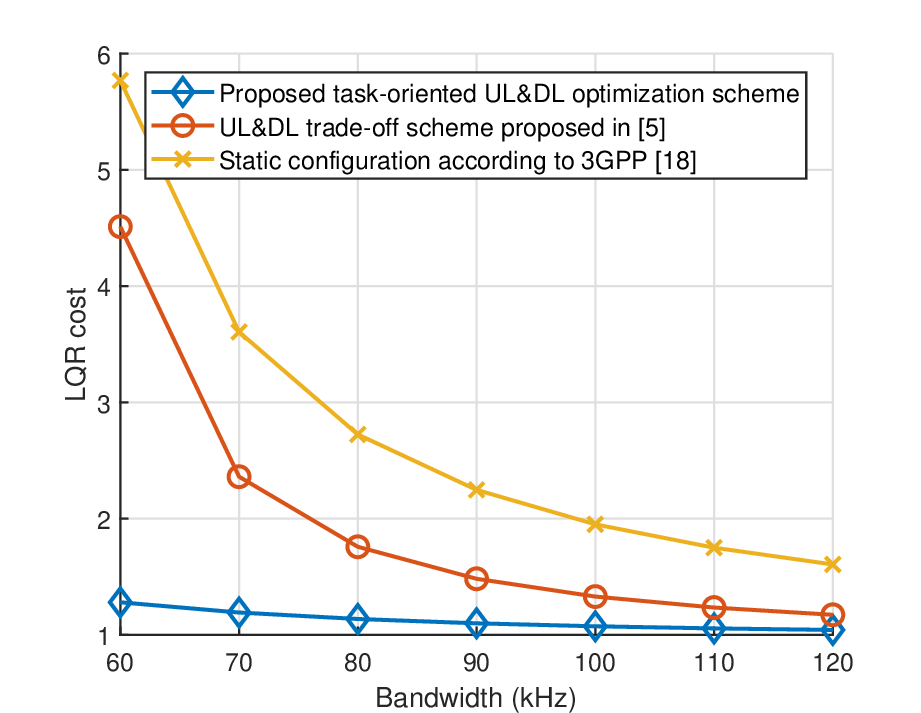}
	\caption{The LQR cost varies with the available bandwidth resources under three UL\&DL configuration schemes. }
	\label{simu1}
\end{figure}

\label{sec_simulation}
In this section, we present simulation results. Simulation parameters are set as: $P_{\text{umax}} = 0.1$ W, $P_{\text{dmax}} = 1$ W, $B_{\text{max}}=1$ MHz, $N_0$=-174 \text{dBm}, and the channel gain are calculated by the path-loss model of $[32.4+20*\log_{10}(f_c)+20*\log_{10}(d_{u/d})] \text{(dB)}$ \cite{refd}, where $f_c$ (MHz) represents the carrier frequency and $d_{u/d}$ (km) represents the transmission distance, i.e., $f_c=2000$ MHz and $d_{u/d}=1$ km. Control related parameters are given by $n=m=100$, $\log_2|\det\mathbf{A}|=50$,  $\mathbf{R}=\mathbf{0}_{100}$, $\mathbf{Q}=\mathbf{I}_{100}$, $\Sigma_{\mathbf{v}}$=$0.01*\mathbf{I}_{100}$  , and $T=20$ ms \cite{Fang}. Computing-related parameters are given by $f_{\text{max}} = 1$ GHz, $\alpha = 100$ cycles/bit, and $\rho=0.01$.

\section{Simulation Results and Discussion}

In Fig. \ref{simu1}, we compare the proposed scheme with the UL\&DL trade-off scheme proposed in \cite{ref1} and the static configuration. For fair comparison, we replicate \cite{ref1} by adopting the sum rate of UL\&DL as the objective, $[\rho D_u+D_d]$, and constraining the UL\&DL disparities using the constraint $|\rho D_u-D_d| \leqslant D_0$, where $D_0=100$ (bits). For the static setup, we apply the equal bandwidth division based on FDD and set $t_u=\frac{6T}{7}, t_c=\frac{T}{14}$, and $t_d=\frac{T}{14}$, which aligns with the slot format 34 in the 5G New Radio (NR) TDD standard \cite{3gpp_r17}.  We can see that the proposed scheme consistently exhibits the lowest LQR cost compared with the other two schemes. This outcome highlights the superiority of jointly configuring UL\&DL from a task-oriented perspective.

We further present Fig. \ref{simu2} to reveal the reason behind the superiority of the proposed scheme. In this simulation, the maximum bandwidth is set as $B_{\max}=1$ MHz. Fig. \ref{simu2} is a dual-axis chart, with the left axis representing the task-related information. From the bars, we can see that under the proposed scheme, the UL\&DL are aligned to transmit the same amount of task-related information within an $\mathbf{SC}^3$ cycle, while the other two schemes fail to achieve this balance. The scheme in \cite{ref1} allocates more resources to the UL, while the static configuration \cite{3gpp_r17} allocates more resources to the DL. This imbalance results in the DL under the scheme in \cite{ref1} and the UL under the static configuration \cite{3gpp_r17}  becoming the bottleneck, limiting the control performance of the $\mathbf{SC}^3$ loop. As a result, the LQR costs  (black curve, measured by the right axis)  under these two schemes are  higher than the proposed scheme, with a greater imbalance corresponding to a higher LQR cost. This verifies the critical importance of maintaining a task-level UL\&DL balance for the $\mathbf{SC}^3$ loop.

\begin{figure} [t]
	\centering
	\includegraphics[width=1\linewidth]{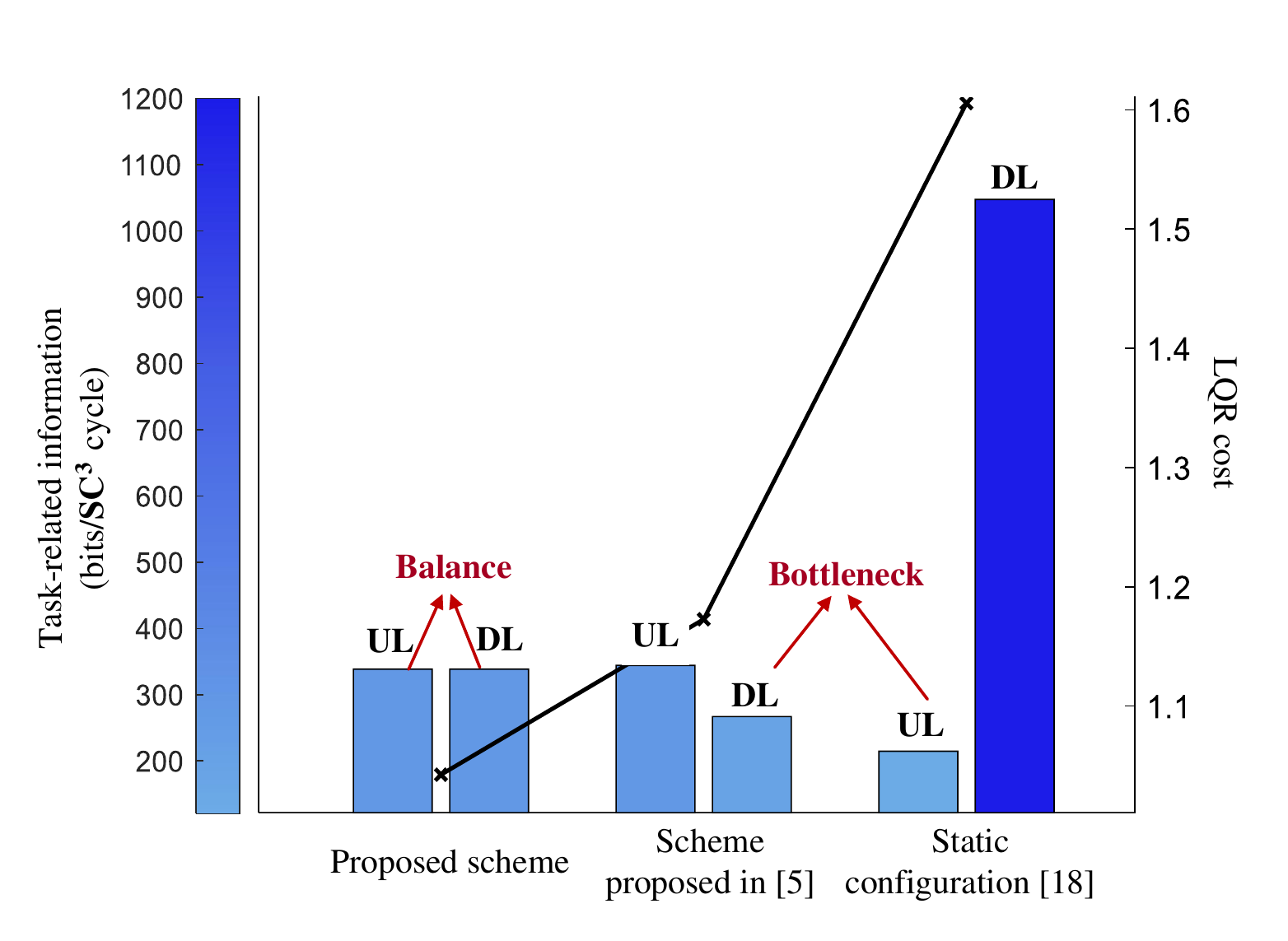}
	\caption{The UL\&DL task-related information and the LQR cost under the proposed scheme, the scheme proposed in \cite{ref1}, and the static configuration \cite{3gpp_r17}.}
	\label{simu2}
\end{figure}


Fig. \ref{simu3} depicts the LQR cost contour map under the optimal UL\&DL and computing configuration. We can see that the LQR cost decreases with the enhancement of both computing and communication resources. Furthermore, the map reveals a notable trade-off between communication bandwidth and computing CPU frequency. Take the contour line with the LQR cost of 1.2 as an example. In the bandwidth-constrained region, a 4 kHz increase in bandwidth can offset the need for 1 GHz CPU frequency. Conversely, in the CPU-constrained region, an additional 20 kHz bandwidth only compensates for 110 MHz CPU frequency. It is interesting to find the marginal utility balance of communication and computing such that the operational cost of the $\mathbf{SC}^3$ loop can be minimized.

%
%
%
%
%
\section{conclusion}

In this letter, we have investigated the basic model of the reflex-arc-like $\mathbf{SC}^3$ loop. To maintain the task-level connections between two communication links within the $\mathbf{SC}^3$ loop, we have treated the sensor and actuator as a virtual user and jointly optimized UL\&DL transmit power, time, bandwidth, and computing CPU frequency to minimize the LQR cost. We have simplified the complex problem into a convex bandwidth allocation problem, along with the optimal closed-form solution for the time allocation. Our simulation results have confirmed the superiority of the proposed task-oriented UL\&DL optimization scheme, highlighting the importance of keeping task-level UL\&DL balance for the $\mathbf{SC}^3$ loop.

\begin{figure} [t]
	\centering
	\includegraphics[width=0.9\linewidth]{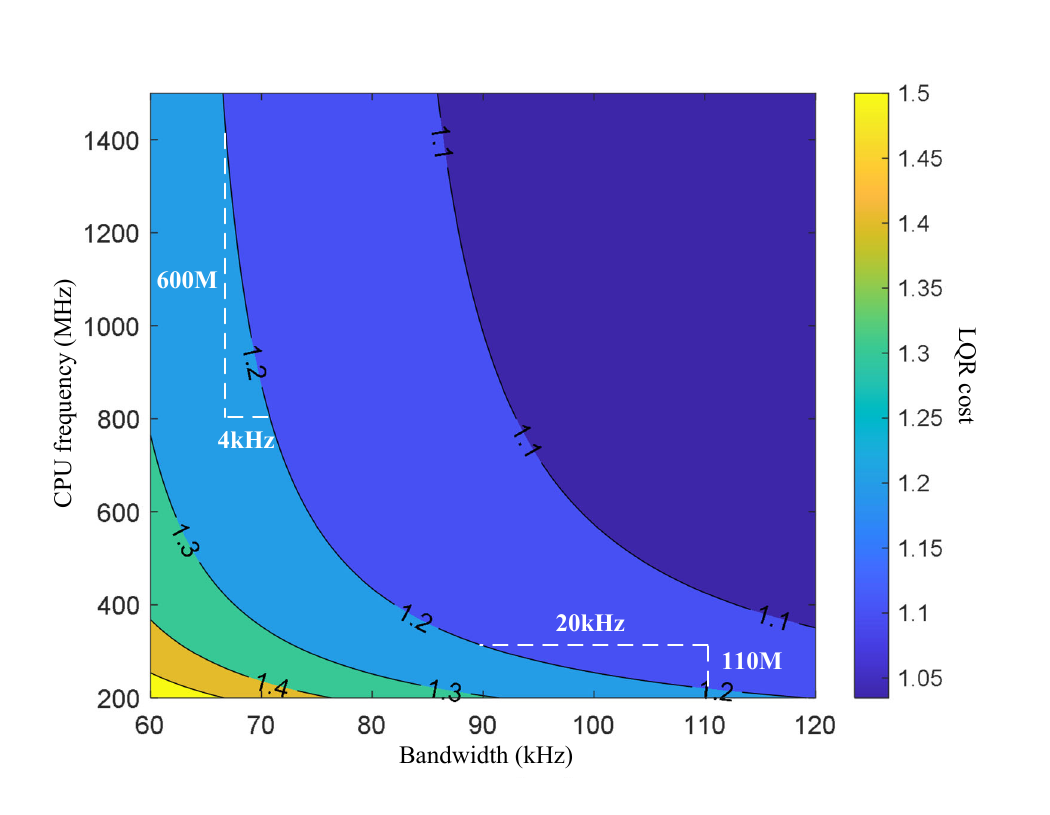}
	\caption{The LQR cost contour map varying with the bandwidth and computing CPU frequency.}
	\label{simu3}
\end{figure}




\begin{thebibliography}{1}
		\bibliographystyle{IEEEtran}

\bibitem{ref0}
Next G Alliance, ``Next G Alliance report: 6G applications and use cases," May, 2022. Available:\url{https://nextgalliance.org/white_papers/6g-applications-and-use-cases/}.


\bibitem{bio1}
Rijnberk, Adam, and H. W. De Vries, eds, ``Medical history and physical examination in companion animals," \emph{Springer Science \& Business Media}, 1995.

\bibitem{feng}
W. Feng, Y. Wang, Y. Chen, N. Ge, and C.-X. Wang, ``Structured satellite-UAV-terrestrial networks for 6G Internet of Things," \emph{IEEE Netw.}, vol. 38, no. 4, pp. 48-54, Jul. 2024.

\bibitem{winner}
N. Wiener, ``Cybernetics: Or control and communication in the animal and the machine," \emph{MIT press}, 2019.



\bibitem{ref1}
A. M. El-Hajj and Z. Dawy, ``On optimized joint uplink/downlink resource allocation in OFDMA networks," in \emph{Proc. 2011 IEEE Symp. Comput. Commun. (ISCC)}, Kerkyra, Greece, 2011, pp. 248-253.

\bibitem{ref2}
A. Al-Shuwaili, O. Simeone, A. Bagheri, and G. Scutari, ``Joint uplink/downlink optimization for backhaul-limited mobile cloud computing with user scheduling," \emph{IEEE Tran. Signal Inf. Process. Netw.}, vol. 3, no. 4, pp. 787-802, Dec. 2017.

\bibitem{ref3}
W. Wen, Y. Fu, T. Q. S. Quek, F. -C. Zheng, and S. Jin, ``Joint uplink/downlink sub-channel, bit and time allocation for multi-access edge computing," \emph{IEEE Commun. Lett.}, vol. 23, no. 10, pp. 1811-1815, Oct. 2019.

\bibitem{ref4}
W. R. Ghanem, V. Jamali, Q. Zhang, and R. Schober, ``Joint uplink-downlink resource allocation for OFDMA-URLLC MEC systems," in \emph{Proc. GLOBECOM 2020-2020 IEEE Global Commun. Conf.}, Taipei, Taiwan, 2020, pp. 1-7.

\bibitem{ref5}
C. She, C. Yang, and T. Q. S. Quek, ``Joint uplink and downlink resource configuration for ultra-reliable and low-latency communications," \emph{IEEE Trans. Commun.}, vol. 66, no. 5, pp. 2266-2280, May 2018.


\bibitem{ref6}
Y. Wang, S. Wu, C. Lei, J. Jiao, and Q. Zhang, ``A Review on wireless networked control system: The communication perspective," \emph{IEEE Intern. Things J.}, vol. 11, no. 5, pp. 7499-7524, 1 Mar. 2024.

\bibitem{Gatsis}
	K. Gatsis, A. Ribeiro and G. J. Pappas, ``Optimal power management in wireless control systems," \emph{IEEE Trans.  Autom. Control}, vol. 59, no. 6, pp. 1495-1510, Jun. 2014.

	\bibitem{Chang}
	B. Chang, L. Zhang, L. Li, G. Zhao and Z. Chen, ``Optimizing resource allocation in URLLC for real-time wireless control systems," \emph{IEEE Trans. Veh. Tech.}, vol. 68, no. 9, pp. 8916-8927, Sep. 2019.


\bibitem{Fang}
X. Fang, W. Feng, Y. Chen, N. Ge, and G. Zheng, ``Control-oriented deep space communications for unmanned space exploration," \emph{IEEE Trans. Wireless Commun.}, Early Access, Jun. 2024.

\bibitem{Ana}
P. M. de Sant Ana, N. Marchenko, P. Popovski, and B. Soret, ``Age of loop for wireless networked control systems optimization", in \emph{Proc. 2021 IEEE 32nd Annual Int. Symp. Pers., Indoor Mobile Radio Commun. (PIMRC)}, Helsinki, Finland, 2021.







\bibitem{ref14}
V. Kostina and B. Hassibi, ``Rate-cost tradeoffs in control," \emph{IEEE Trans. Autom. Control}, vol. 64, no. 11, pp. 4525-4540, Nov. 2019.

\bibitem{Nair}
G. N. Nair, F. Fagnani, S. Zampieri, and R. J. Evans, ``Feedback control under data rate constraints: An overview," \emph{in Proc. IEEE}, vol. 95, no. 1, pp. 108-137, Jan. 2007.


\bibitem{refd}
L. Klozar and J. Prokopec, ``Propagation path loss models for mobile communication," in \emph{Proc. 21st Int. Conf. Radioelektronika}, 2011, Brno, Czech Republic, 2011, pp. 1-4.

\bibitem{3gpp_r17}
3GPP TS. 138. 104, v. 17.5.0, ``Base station (BS) radio transmission and reception," Tech. Spec. Group Radio Access Network, Rel. 17, Apr. 2022.

	\end{thebibliography}
\end{document}